\DeclareMathOperator{\rep}{rep}
\DeclareMathOperator{\val}{val}
\DeclareMathOperator{\N}{\mathbb N}
\newtheorem{theorem}{Theorem}
\newtheorem{lemma}[theorem]{Lemma}
\newtheorem{corollary}[theorem]{Corollary}
\newtheorem{proposition}[theorem]{Proposition}
\theoremstyle{definition}
\newtheorem{definition}[theorem]{Definition}
\newtheorem{remark}[theorem]{Remark}
\newtheorem{example}[theorem]{Example}
\newcommand{\bdisplay}{\begin{description}\footnotesize\item[]}
\newcommand{\edisplay}{\end{description}}
\newcommand{\bquot}[1]{\begin{quotation}\small\noindent
  \textbf{#1}\hspace{\labelsep}\ignorespaces}
\newcommand{\equot}{\unskip\end{quotation}}
\title{State Complexity of Testing Divisibility}
\author{Emilie Charlier, Narad Rampersad, Michel Rigo, Laurent Waxweiler
\institute{Department of Mathematics, University of Li{\`e}ge\\
Grande Traverse 12 (B37), B-4000 Li{\`e}ge, Belgium}
\email{echarlier@ulg.ac.be, nrampersad@ulg.ac.be,
M.Rigo@ulg.ac.be, L.Waxweiler@ulg.ac.be}
}
\begin{document}
\maketitle

\begin{abstract}
    Under some mild assumptions, we study the state complexity of the
    trim minimal automaton accepting the greedy representations of the
    multiples of $m\ge 2$ for a wide class of linear numeration
    systems. As an example, the number of states of the trim minimal
    automaton accepting the greedy representations of $m\N$ in the
    Fibonacci system is exactly $2m^2$\@.

\end{abstract}

\section{Introduction}\label{s:in}

Cobham \cite{Cobham} showed that ultimately periodic sets of non-negative integers are the only sets that are recognized by a finite automaton in every integer base numeration system. The
ultimately periodic sets are also exactly the sets definable in the
Presburger arithmetic $\langle\mathbb{N},+\rangle$. In the context of a non-standard numeration system $U$, if $\mathbb{N}$ is $U$-recognizable, then $U$ is easily seen to be a
linear numeration system \cite{Shallit}. For linear numeration systems, ultimately periodic sets are all recognized by finite automata if and only if $\N$ is (see Theorem~\ref{the:folk} below). Conditions on a linear numeration system $U$ for $\mathbb{N}$ to be $U$-recognizable are considered in \cite{Hollander}. Among linear numeration systems for which $\mathbb{N}$ is $U$-recognizable, the class of systems whose characteristic polynomial
is the minimal polynomial of a Pisot number has been widely studied
\cite{BH}. An example of such a system is given by the 
Fibonacci numeration system (see Example~\ref{exa:1}). 

Let $U$ be a linear numeration system and $X$ be a
$U$-recognizable set of non-negative integers given by some DFA recognizing
the greedy representations of elements of $X$. For integer base systems, Honkala has proved that one
can decide whether or not $X$ is ultimately periodic \cite{Hon}. Another,
shorter proof of this result can be found in \cite{ARS}. 
For a wide class of linear numeration systems, the same decidability
question is answered positively in \cite{CR,BCFR}.  For all the above
mentioned reasons ultimately periodic sets of integers and, in
particular, the recognizability of a given divisibility criterion by
finite automata deserve special interest.

Lecomte and Rigo \cite{LR} showed the following: given a
regular language $L=\{w_0<w_1<\cdots\}$ genealogically ordered, extracting from $L$ words whose indices belong to an
ultimately periodic set $I\subset\mathbb{N}$ is a
regularity-preserving operation defining a language $L_I$. Krieger {\em et al.} \cite{Ra} considered the state complexity of this operation. 
If the minimal automaton of $L$ has $n$ states, it is natural to give
bounds or try to estimate the number of states of the minimal
automaton of $L_I$ as a function of $n$, the preperiod and period of $I$. Such results could be useful in solving the decidability question mentioned in the last paragraph. For example, Alexeev \cite{Al} recently gave an exact formula for the number of states of the
minimal automaton of the language $0^*\rep_b(m\mathbb{N})$, that is
the set of $b$-ary representations of the multiples of $m\ge 1$.



In this paper, we study the
state complexity for the divisibility criterion by $m\ge 2$ in the
framework of linear numeration systems. Let $0^*\rep_U(m\mathbb{N})$ be the language of greedy representations of the multiples of $m\ge 1$ in the numeration system $U$. Under some mild
assumptions, Theorem~\ref{the:main} gives the number of states of the
trim minimal automaton of $0^*\rep_U(m\mathbb{N})$ from which infinitely many words are accepted. As a corollary, we show that, for a certain class of numeration systems, we can give the precise number of states of this automaton. For
instance, for the Fibonacci numeration system, the corresponding
number of states is $2m^2$, see Corollary~\ref{cor:bonacci}. Finally we are able to give a lower bound for the state complexity of $0^*\rep_U(m\N)$ for any numeration system.

Note that the study of state complexity could possibly be related to the length
of the formulas describing such sets in a given numeration system. It is noteworthy that for linear
numeration systems whose characteristic polynomial is the minimal
polynomial of a Pisot number, $U$-recognizable sets can be
characterized by first order formulas of a convenient extension of
$\langle\mathbb{N},+\rangle$, see \cite{BH}.

Our result can only be fully understood when
one has a clear picture of $\mathcal{A}_U$, the trim minimal automaton recognizing the set $0^*\rep_U(\N)$ of all greedy representations. Such a description for a
linear numeration system satisfying the dominant root condition
(see below) is partially recalled in Theorem~\ref{the:descriptionNU} \cite{CRRW}.

\section{Background on Numeration Systems}\label{sec:background}
In this paper,  when we write $x = x_{n-1} \cdots x_{0}$ where $x$ is a word, we mean that $x_i$ is a letter for all $i \in \{0,\ldots,n-1\}$.

An increasing sequence $U=(U_n)_{n\ge 0}$ of integers is a
{\em numeration system}, or a {\em numeration basis}, if $U_0=1$ and $C_U:=\sup_{n\ge 0}\lceil\frac{U_{n+1}}{U_n}\rceil<+\infty$. We let $A_U$ be the alphabet $\{0,\ldots,C_U-1\}$. A greedy representation of a non-negative integer $n$ is a word $w=w_{\ell-1}\cdots w_0$ over $A_U$ satisfying
\[\sum_{i=0}^{\ell-1} w_iU_i=n \text{ and }\forall j\in\{1,\ldots,\ell\},\quad \sum_{i=0}^{j-1}w_iU_i<U_j.\]
We denote by $\rep_U(n)$ the greedy representation of $n>0$ satisfying
$w_{\ell-1}\neq 0$.
By convention, $\rep_U(0)$ is the empty word $\varepsilon$. 
The language $\rep_U(\N)$ is called the {\em numeration language}. A set
$X$ of integers is {\em $U$-recognizable} if $\rep_U(X)$ is regular,
i.e., accepted by a finite automaton.
If $\N$ is $U$-recognizable, then we let $\mathcal{A}_U=(Q_U,q_{U,0},F_U,A_U,\delta_U)$ denote the trim minimal
automaton of the language $0^*\rep_U(\mathbb{N})$ having
$\#\mathcal{A}_U$ states. The {\em numerical value map} $\val_U:A_U^*\to\mathbb{N}$ maps any
word $d_{\ell-1}\cdots d_0$ over $A_U$ to $\sum_{i=0}^{\ell-1} d_iU_i$. 

 \begin{definition}
     A numeration system $U=(U_n)_{n\ge 0}$ is said to be {\em linear}, 
     if there exist $k\ge1$ and $a_0,\ldots,a_{k-1}\in\mathbb{Z}$ such that
 \begin{equation}
     \label{eq:rec}
     \forall n\in\N,\ U_{n+k}=a_{k-1} U_{n+k-1}+\cdots +a_0 U_n.
 \end{equation}
We say that $k$ is the {\em length} of the recurrence relation.
 \end{definition}

\begin{theorem}[\cite{CANT}]\label{the:folk}
    Let $p,r\ge 0$. If $U=(U_n)_{n\ge 0}$ is a linear numeration
    system, then
$$\val_U^{-1}(p\,\mathbb{N}+r)=\{w\in A_U^*\mid 
\val_U(w)\in p\,\mathbb{N}+r \}$$
is accepted by a DFA that can be effectively constructed. In
particular, if $\mathbb{N}$ is $U$-recognizable, then any eventually periodic
set is $U$-recognizable.
\end{theorem}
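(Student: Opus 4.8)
The plan is to build, for fixed $p\ge 1$ and $r$, a deterministic automaton over $A_U$ that reads a word most‑significant digit first and whose state after reading a prefix records just enough arithmetic data modulo $p$ to recover the value of the whole word once its end is reached. The difficulty compared with the integer‑base situation is that there is no single ``base'': appending a low‑order digit to a word does not act linearly on $\val_U$ alone, so one cannot simply track $\val_U(u)\bmod p$. The remedy I would use is to carry a whole $k$‑tuple of partial sums with shifted weights, and let the linear recurrence \eqref{eq:rec} close the update.

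Concretely, for a word $u=u_{m-1}\cdots u_0$ over $A_U$ and $t\in\{0,\dots,k-1\}$, set $S_t(u)=\sum_{i=0}^{m-1}u_iU_{i+t}\bmod p$, so that $S_0(u)=\val_U(u)\bmod p$. I would then check by a direct computation — rewriting $\sum_i u_iU_{i+k}$ via \eqref{eq:rec} in terms of the $S_t(u)$ — that $(S_0(ud),\dots,S_{k-1}(ud))$ depends only on $(S_0(u),\dots,S_{k-1}(u))$ and the appended digit $d$, namely $S_t(ud)=S_{t+1}(u)+dU_t$ for $t<k-1$ and $S_{k-1}(ud)=a_{k-1}S_{k-1}(u)+\cdots+a_0S_0(u)+dU_{k-1}$. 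This yields a well‑defined transition function on the finite set $(\mathbb{Z}/p\mathbb{Z})^k$; taking $\mathbf{0}$ as initial state (the value of $\varepsilon$) and $\{(s_0,\dots,s_{k-1}):s_0\equiv r\pmod p\}$ as final states gives a DFA accepting exactly $\val_U^{-1}(p\,\mathbb{N}+r)$. It is effectively constructible, since $a_0,\dots,a_{k-1}$ and $U_0,\dots,U_{k-1}$ are given data. (An equivalent alternative, up to reversal, is to read least‑significant digit first while tracking the current position through the eventual period of the sequence $(U_n\bmod p)_{n\ge 0}$, which is eventually periodic precisely because $U$ is linear.)

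For the final assertion, assume $\mathbb{N}$ is $U$-recognizable, i.e. $0^*\rep_U(\mathbb{N})$ is regular. Every eventually periodic set is a finite union of a finite set and of sets of the form $p\,\mathbb{N}+r$; finite sets are trivially $U$-recognizable and regularity is preserved by finite union, so it suffices to treat a single $p\,\mathbb{N}+r$. Then $0^*\rep_U(p\,\mathbb{N}+r)=\val_U^{-1}(p\,\mathbb{N}+r)\cap 0^*\rep_U(\mathbb{N})$ is the intersection of two regular languages, hence regular, and deleting leading zeros preserves regularity, so $p\,\mathbb{N}+r$ is $U$-recognizable.

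The only step that needs an idea rather than bookkeeping is choosing the state space: once one sees that the correct invariant is the vector of shifted partial sums $(S_t(u))_{t<k}$, the verification that the transition is history‑independent is exactly where \eqref{eq:rec} enters, and everything after that (effectiveness, and the closure arguments for the ``in particular'' clause) is routine.
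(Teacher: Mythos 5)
Your construction is correct, and note first that the paper itself contains no proof of Theorem~\ref{the:folk}: it is quoted from \cite{CANT}, where the standard argument is essentially the one you relegate to a parenthesis --- read least significant digit first, use that $(U_n \bmod p)_{n\ge 0}$ is eventually periodic because it satisfies the recurrence \eqref{eq:rec} modulo $p$, track the running value modulo $p$ together with the position inside the preperiod and period, and finish by closure of regular languages under reversal. Your main construction takes a genuinely different and arguably cleaner route: it avoids the periodicity argument entirely by keeping, most significant digit first, the $k$-tuple of shifted partial sums $S_t(u)=\sum_i u_i U_{i+t} \bmod p$, and the update rules $S_t(ud)=S_{t+1}(u)+dU_t$ for $t<k-1$ and $S_{k-1}(ud)=a_{k-1}S_{k-1}(u)+\cdots+a_0S_0(u)+dU_{k-1}$ do follow from applying \eqref{eq:rec} termwise, so the transition on $(\mathbb{Z}/p\mathbb{Z})^k$ is history-independent and you get a deterministic, effectively constructible automaton with an explicit bound of $p^k$ states --- something the reversal argument does not hand you directly. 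Two small points to patch: the statement allows $p=0$ (then $p\,\mathbb{N}+r=\{r\}$) and allows $r\ge p$, in which case $p\,\mathbb{N}+r$ is a proper subset of the residue class that your final-state condition $s_0\equiv r\pmod p$ accepts; both discrepancies involve only a finite set $F$ of values, and since $U$ is strictly increasing with $U_0=1$ the language $\val_U^{-1}(F)$ is $0^*$ followed by a finite set of words, hence regular, so it can be subtracted (or added, in the $p=0$ case) without losing effectiveness. With that minor repair, your treatment of the ``in particular'' clause (decomposing an eventually periodic set into a finite set and finitely many progressions, intersecting with $0^*\rep_U(\mathbb{N})$, and stripping leading zeros) is routine and correct.
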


%

\begin{definition}
If $U=(U_n)_{n\ge0}$ is a linear numeration system satisfying \[\lim_{n\to+\infty}\frac{U_{n+1}}{U_n}=\beta\] for some real $\beta> 1$, then it is said to {\em satisfy the dominant root condition} and $\beta$ is called the
{\em dominant root} of the recurrence.
\end{definition}

\begin{example}[(Fibonacci numeration system)]\label{exa:1}
    With $U_{n+2}=U_{n+1}+U_n$ and $U_0=1$, $U_1=2$, we get the usual
    Fibonacci numeration system. The Golden Ratio $(1+\sqrt{5})/2$ is the
    dominant root.  For
    this system, $A_U=\{0,1\}$ and $\mathcal{A}_U$ accepts all words
    over $A_U$ except those containing the factor $11$. 
\begin{figure}[htbp]
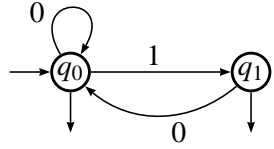

        \centering
\VCDraw{%
        \begin{VCPicture}{(-1,-1.7)(5,1.2)}
 \State[q_0]{(0,0)}{1}
 \State[q_1]{(4,0)}{2} 
\Initial{1}
\Final[s]{1}
\Final[s]{2}
\LoopN{1}{0}
\EdgeL{1}{2}{1}
\VArcL{arcangle=40}{2}{1}{0} 
\end{VCPicture}
}
        \caption{The automaton $\mathcal{A}_U$ for the Fibonacci numeration system.}
        \label{fig:fibonacci}
    \end{figure}
\end{example}

\begin{example}[($\ell$-bonacci numeration system)]\label{exa:nbonacci}
    Let $\ell\ge 2$. Consider the linear recurrence sequence defined by 
$$\forall n\in\N,\ U_{n+\ell}=\sum_{i=0}^{\ell-1}U_{n+i}$$
and for $i\in\{0,\ldots,\ell-1\}$, $U_i=2^i$.  For
    this system, $A_U=\{0,1\}$ and $\mathcal{A}_U$ accepts all words
    over $A_U$ except those containing the factor $1^\ell$. 
\begin{figure}[htbp]
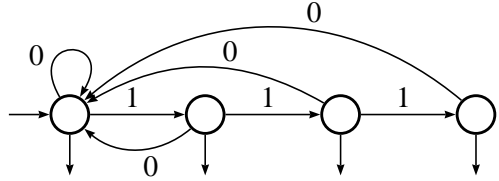

        \centering
\VCDraw{%
        \begin{VCPicture}{(-1,-1.7)(10,2)}
 \State{(0,0)}{1}
 \State{(3,0)}{2} 
 \State{(6,0)}{3} 
 \State{(9,0)}{4} 
\Initial{1}
\Final[s]{1}
\Final[s]{2}
\Final[s]{3}
\Final[s]{4}
\LoopN{1}{0}
\EdgeL{1}{2}{1}
\EdgeL{2}{3}{1}
\EdgeL{3}{4}{1}
\VArcL{arcangle=40}{2}{1}{0} 
\VArcR{arcangle=-30}{3}{1}{0} 
\VArcR{arcangle=-40}{4}{1}{0} 

\end{VCPicture}
}
        \caption{The automaton $\mathcal{A}_U$ for the $4$-bonacci numeration system.}
        \label{fig:lbonacci}
    \end{figure}
\end{example}

\begin{example}\label{exa:laurent}
     With
     $U_{n+2}=2U_{n+1}+U_n$, $U_0=1$, $U_1=3$, we have a
     numeration system 
 $$(U_n)_{n\ge 0}=1,3,7,17,41,99,239,\ldots$$ having a dominant root $\beta=1+\sqrt{2}$ and 
 where the corresponding automaton $\mathcal{A}_U$ is depicted in Figure~\ref{fig:sqrt2}.
 \begin{figure}[htbp]
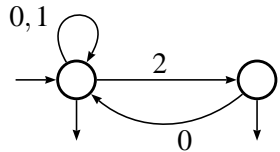

         \centering
 \VCDraw{%
         \begin{VCPicture}{(-1,-1.7)(5,1.2)}
  \State{(0,0)}{1}
  \State{(4,0)}{2} 
 \Initial{1}
 \Final[s]{1}
 \Final[s]{2}
 \LoopN{1}{0,1}
 \EdgeL{1}{2}{2}
 \VArcL{arcangle=40}{2}{1}{0} 
 \end{VCPicture}
 }
         \caption{The automaton $\mathcal{A}_U$ for the system having dominant root $1+\sqrt{2}$.}
         \label{fig:sqrt2}
    \end{figure}
\end{example}

Recall that the states of the minimal automaton of an arbitrary
language $L$ over an alphabet $A$ are given by the 
equivalence classes of the Myhill-Nerode congruence $\sim_L$, which is
defined by
\[\forall w,z\in A^*,\ w\sim_L z\Leftrightarrow \{x\in A^*\mid wx\in L\}=\{x\in A^*\mid zx\in L\}.\]
Equivalently, the states of the minimal automaton of $L$ correspond
to the sets $w^{-1}L=\{x\in A^*\mid wx\in L\}$. In this paper the symbol $\sim$ will be used
to denote Myhill-Nerode congruences.

\begin{definition}
A directed multi-graph is \emph{strongly connected} if for all pairs of
vertices $(s,t)$, there is a directed path from $s$ to $t$.  A {\em strongly
 connected component} of a directed multi-graph is a maximal strongly connected
subgraph.  Such a component is said to be {\em non-trivial} if
it does not consist of a single vertex with no loop. 
\end{definition}

\begin{theorem}\label{the:descriptionNU}\cite{CRRW}
    Let $U$ be a linear numeration system such that $\rep_U(\mathbb{N})$ is regular.
    \begin{itemize}
      \item[\textup{(i)}] The automaton $\mathcal{A}_U$ has a non-trivial strongly connected component
        $\mathcal{C}_U$ containing the initial state.
      \item[\textup{(ii)}] If $p$ is a state in $\mathcal{C}_U$, then there exists
        $N\in\N$ such that $\delta_U(p,0^n)=q_{U,0}$ for all $n\ge N$.
        In particular, if $q$ (resp.
        $r$) is a state in $\mathcal{C}_U$ (resp. not in
        $\mathcal{C}_U$) and if $\delta_U(q,\sigma)=r$, then
        $\sigma\neq 0$.
\item[\textup{(iii)}] If $\mathcal{C}_U$ is the only non-trivial strongly
  connected component of $\mathcal{A}_U$, then we have
 $\displaystyle\lim_{n\to+\infty}U_{n+1}-U_n=+\infty$.
        \item[\textup{(iv)}] If $\displaystyle\lim_{n\to+\infty}U_{n+1}-U_n=+\infty$,
then the state $\delta_U(q_{U,0},1)$ belongs to $\mathcal{C}_U$.
 \end{itemize}

\noindent In the case where the numeration system $U$ has a dominant root
$\beta > 1$, if the automaton $\mathcal{A}_U$ has more than one non-trivial
strongly connected component, then any such component distinct from
$\mathcal{C}_U$ is restricted to a cycle all of whose edges are labelled $0$.
\end{theorem}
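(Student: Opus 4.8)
The whole proof hinges on the behaviour of the letter $0$ at the initial state. Since $0^{*}\rep_U(\mathbb N)=0\cdot 0^{*}\rep_U(\mathbb N)\cup\rep_U(\mathbb N)$ and no nonempty word of $\rep_U(\mathbb N)$ begins with $0$, peeling off a leading $0$ gives $0^{-1}\bigl(0^{*}\rep_U(\mathbb N)\bigr)=0^{*}\rep_U(\mathbb N)$, i.e.\ $\delta_U(q_{U,0},0)=q_{U,0}$. Hence the strongly connected component $\mathcal C_U$ of $q_{U,0}$ carries a loop and is non-trivial, which proves (i). The ``in particular'' part of (ii) then follows from its main part: if $q\in\mathcal C_U$ and $\delta_U(q,0)=r$, then $\delta_U(r,0^{n})=\delta_U(q,0^{n+1})=q_{U,0}$ for all large $n$ by the main part, so $r$ reaches $q_{U,0}$; being reachable from $q_{U,0}$ by trimness, $r$ lies in $\mathcal C_U$.

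To prove the main part of (ii) — and to set up (iii), (iv) — I would first record the basic computation. Put $L=0^{*}\rep_U(\mathbb N)$; for a word $v$ and the state $p=v^{-1}L=\delta_U(q_{U,0},v)$, expanding the greedy inequalities of $v0^{n}x$ at the $|v|$ positions occupied by $v$ yields
\[(v0^{n})^{-1}L=\Bigl\{\,x\in L\ :\ \val_U(x)<\min_{1\le s\le|v|}\Bigl(U_{|x|+n+s}-\sum_{0\le i<s}v_iU_{i+|x|+n}\Bigr)\,\Bigr\}.\]
Because every value in $\{0,\dots,U_j-1\}$ is $\val_U(x)$ for some $x\in L$ of length $j$, this set equals $L$ if and only if the inequalities
\[U_{j+n+s}-\sum_{0\le i<s}v_iU_{i+j+n}\ \ge\ U_j\qquad(j\ge0,\ 1\le s\le|v|)\]
all hold; call this condition $(\ast)$. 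Now if $p\in\mathcal C_U$, pick $y$ with $\delta_U(p,y)=q_{U,0}$; then $(vy)^{-1}L=L$, so $vyx\in L$ for every $x\in L$. Writing the greedy inequality of $vyx$ at position $|x|+|y|+s$ for an $x\in L$ realising the maximal value $U_{|x|}-1$, and discarding the non-negative contribution of $y$, gives exactly $(\ast)$ with $n=|y|$. Hence $\delta_U(p,0^{n})=q_{U,0}$ for every $n\ge|y|$; the converse direction is immediate from trimness. This is (ii).

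For (iii) and (iv) I would analyse the eventually periodic sequence $n\mapsto a_n:=\delta_U(q_{U,0},1\,0^{n})$ (defined for all $n$ since $1\,0^{n}\in L$), whose periodic part is a cycle $C$ all of whose edges carry the label $0$. By $(\ast)$ with $v=1$, one has $C=\{q_{U,0}\}$ iff there is $n$ with $U_{j+n+1}-U_{j+n}\ge U_j$ for all $j\ge0$. If $U_{n+1}-U_n\not\to\infty$, some value $d$ is attained by $U_{m+1}-U_m$ for infinitely many $m$; given $n$, picking such an $m\ge n+j_0$ with $U_{j_0}>d$ defeats this criterion, so $C\neq\{q_{U,0}\}$. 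No state of $C$ lies in $\mathcal C_U$ — such a state would, by (ii), map under a power of $0$ to $q_{U,0}\notin C$ — so the component containing $C$ is a non-trivial component distinct from $\mathcal C_U$; this proves (iii) by contraposition. Conversely, if $U_{n+1}-U_n\to\infty$ but $C\neq\{q_{U,0}\}$, choose $q^{*}=a_P$ on $C$ with $\delta_U(q^{*},0^{Q})=q^{*}$, $Q\ge1$; then $(1\,0^{P})^{-1}L=(1\,0^{P+Q})^{-1}L$, and since $q^{*}\neq q_{U,0}$ there is $j_0$ with $U_{j_0+P+1}-U_{j_0+P}<U_{j_0}$. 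Intersecting the two equal quotients with the words of length $j_0$ forces $U_{m+1}-U_m$ to take the same value at every $m=j_0+P+kQ$, $k\ge0$, contradicting $U_{n+1}-U_n\to\infty$. Hence $C=\{q_{U,0}\}$, so $\delta_U(q_{U,0},1)$ reaches $q_{U,0}$ and lies in $\mathcal C_U$: this is (iv).

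Finally, assume $U$ has a dominant root $\beta>1$; then $U_{n+1}-U_n>\tfrac{\beta-1}{2}U_n\to\infty$ and $U_n\asymp\beta^{n}$. Let $\mathcal D\neq\mathcal C_U$ be a non-trivial component. By (ii) and trimness no state of $\mathcal D$ reaches $q_{U,0}$, so for $p\in\mathcal D$ the orbit $\delta_U(p,0^{n})$ avoids $q_{U,0}$ and stabilises on a $0$-cycle; it remains to show that this $0$-cycle is all of $\mathcal D$, equivalently that $\mathcal D$ has no cycle reading a non-zero letter. Given such a cycle $\delta_U(p,c_1\cdots c_r)=p$ with some $c_t\neq0$ and a word $u$ with $\delta_U(q_{U,0},u)=p$ and $uz\in L$, all the words $u(c_1\cdots c_r)^{k}z$ lie in $L$, and the plan is to contradict $\val_U(\cdot)<U_{\mathrm{length}}$ by estimating the contribution of the repeated non-zero digit using $U_n\asymp\beta^{n}$. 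I expect this to be the main obstacle: the topmost greedy inequality alone is inconclusive when $\beta$ is close to $1$, so one must combine several of the greedy inequalities along the cycle — in effect controlling how the prefix reaching $p$ forces the expansion $\sum_i v_i\beta^{i}$ to hit a power of $\beta$ — and working out that interplay is where the real difficulty lies.
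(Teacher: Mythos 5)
This theorem is stated in the paper with a citation to \cite{CRRW} and no proof is given here, so there is nothing in-paper to compare against; judged on its own, your proposal establishes items (i)--(iv) but not the final assertion, and that is a genuine gap. The part you do prove is sound: $0^{-1}\bigl(0^*\rep_U(\mathbb{N})\bigr)=0^*\rep_U(\mathbb{N})$ does give the $0$-loop at $q_{U,0}$, hence (i); your characterization $(\ast)$ of when $(v0^n)^{-1}L=L$ is correct because the values of length-$j$ words of $L$ exhaust $\{0,\ldots,U_j-1\}$; in (ii) the conclusion for all $n\ge|y|$ is immediate once $\delta_U(p,0^{|y|})=q_{U,0}$, since the $0$-orbit then stays at $q_{U,0}$; and the two halves of (iii)/(iv), via the eventual $0$-cycle $C$ of the orbit $\delta_U(q_{U,0},10^n)$ --- pigeonholing a value $d$ of $U_{m+1}-U_m$ attained infinitely often in one direction, and equating the length-$j_0$ slices of the quotients $(10^{P+kQ})^{-1}L$ to force $U_{m+1}-U_m$ to be constant along an arithmetic progression in the other --- are both correct.

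The gap is the concluding statement (dominant root $\beta>1$ implies any non-trivial component $\mathcal{D}\neq\mathcal{C}_U$ is a cycle of $0$-labelled edges). You reduce it correctly (by determinism, it suffices to exclude cycles of $\mathcal{D}$ reading a nonzero letter), but then you only sketch a pumping plan and explicitly concede that the needed estimate is out of reach. Indeed the step would fail as stated: for $u(c_1\cdots c_r)^kz\in L$ the pumped nonzero digit contributes on the order of $U_{|z|+(k-1)r}$ while the greedy bound is $U_{|u|+|z|+kr}$, so the topmost inequality (and even all inequalities at positions inside the pumped block, taken crudely) yields no contradiction; some finer input about the states outside $\mathcal{C}_U$ and the asymptotics $U_{n+j}/U_n\to\beta^j$ is required, and that is exactly where the substance of the cited result lies. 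Note also that the growth claim $U_n\asymp\beta^n$ you plan to use does not follow from the dominant root condition as defined here (it gives only $U_{n+1}/U_n\to\beta$, hence $U_n^{1/n}\to\beta$), so even the intended route rests on an unjustified assumption. As it stands, the last claim of the theorem is unproved.
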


\section{State complexity for divisibility criterion}

\begin{definition}\label{def:k}
    Let $U=(U_n)_{n\ge 0}$ be a numeration system and $m\ge 2$ be an
    integer. The sequence $(U_n\bmod{m})_{n\ge 0}$ satisfies a linear
    recurrence relation of minimal length. This integer is denoted by
    $k_{U,m}$ or simply by $k$ if the context is clear. This quantity
    is given by the largest $t$ such that
$$\det H_t\not\equiv 0 \pmod{m}, \text{ where }
H_t=\begin{pmatrix}
    U_0&U_1&\cdots &U_{t-1}\\
    U_1&U_2&\cdots &U_t\\
\vdots&\vdots&\ddots &\vdots\\
U_{t-1}&U_t&\cdots & U_{2t-2}
\end{pmatrix}.$$
\end{definition}

\begin{example}\label{exa:laurent2}
    Let $m=2$ and consider the sequence introduced in
    Example~\ref{exa:laurent}. The sequence $(U_n\bmod{2})_{n\ge 0}$
    is constant and trivially satisfies the recurrence relation $U_{n+1}=U_n$ with
    $U_0=1$. Therefore, we get $k_{U,2}=1$. For $m=4$, one can check that
    $k_{U,4}=2$.
\end{example}

\begin{definition}
    Let $U=(U_n)_{n\ge 0}$ be a numeration system and $m\ge 2$ be an
    integer. Let $k=k_{U,m}$. Consider the system of linear equations
    $$H_k\, \mathbf{x}\equiv\mathbf{b}\pmod m$$ where $H_k$ is the $k\times k$ matrix given in
    Definition~\ref{def:k}.  We let $S_{U,m}$ denote the number of
    $k$-tuples $\mathbf{b}$ in $\{0,\ldots,m-1\}^k$ such that the system
    $H_k\, \mathbf{x}\equiv\mathbf{b}\bmod{m}$ has at least one solution.
\end{definition}

\begin{example}
    Again take the same recurrence relation as in Example~\ref{exa:laurent} and $m=4$.
    Consider the system 
$$\left\{
\begin{array}{rcl}
1\, x_1+3\, x_2&\equiv&b_1\pmod{4}\\
3\, x_1+7\, x_2&\equiv&b_2\pmod{4}\\
\end{array}\right.$$
We have $2x_1\equiv b_2-b_1\pmod{4}$. Hence for each value of $b_1$ in
$\{0,\ldots,3\}$, $b_2$ can take at most $2$ values. One can therefore
check that $S_{U,4}=8$.
\end{example}

\begin{remark}
  Let $\ell\ge
  k=k_{U,m}$. 
  Then the number of $\ell$-tuples $\mathbf{b}$ in
  $\{0,\ldots,m-1\}^\ell$ such that the system $H_\ell\,
  \mathbf{x}\equiv\mathbf{b}\pmod{m}$ has at least one solution equals
  $S_{U,m}$. Let us show this assertion for $\ell=k+1$.  Let $H_\ell'$
  denote the $\ell\times k$ matrix obtained by deleting the last
  column of $H_\ell$ and let ${\bf x}'$ denote the $k$-tuple obtained
  by deleting the last element of ${\bf x}$.  Observe that the
  $\ell$-th column of $H_\ell$ is a linear combination of the other
  columns of $H_\ell$.  It follows that if ${\bf b} =
  (b_0,\ldots,b_{k-1},b) \in\{0,\ldots,m-1\}^\ell$ is an $\ell$-tuple
  for which the system $H_\ell'{\bf x}' \equiv {\bf b} \pmod m$ has a
  solution, then ${\bf b}' = (b_0,\ldots,b_{k-1})
  \in\{0,\ldots,m-1\}^k$ is a $k$-tuple for which the system $H_k{\bf
    x}' \equiv {\bf b}' \pmod m$ also has a solution.  Furthermore,
  the $\ell$-th row of $H_\ell'$ is a linear combination of the other
  rows of $H_\ell'$, so for every such ${\bf b}'$, there is exactly
  one ${\bf b}$ such that $H_\ell'{\bf x}' \equiv {\bf b} \pmod m$ has
  a solution.  This establishes the claim.
\end{remark}

We define two properties that $\mathcal{A}_U$ may satisfy:
\begin{enumerate}
  \item[\textup{(H.1)}] $\mathcal{A}_U$ has a single strongly connected component denoted by $\mathcal{C}_U$,
  \item[\textup{(H.2)}] for all states $p,q$ in $\mathcal{C}_U$, with
    $p\neq q$, there exists a word $x_{pq}$ such that
    $\delta_U(p,x_{pq})\in\mathcal{C}_U$ and $\delta_U(q,x_{pq})\not\in\mathcal{C}_U$, or,
    $\delta_U(p,x_{pq})\not\in\mathcal{C}_U$ and $\delta_U(q,x_{pq})\in\mathcal{C}_U$.
\end{enumerate}

\begin{theorem}\label{the:main}
    Let $m\ge 2$ be an integer. Let $U=(U_n)_{n\ge 0}$ be a linear 
    numeration system satisfying the recurrence relation \eqref{eq:rec} 
	such that
    \begin{enumerate}
      \item[\textup{(a)}] $\mathbb{N}$ is $U$-recognizable and $\mathcal{A}_U$ satisfies the assumptions \textup{(H.1)} and \textup{(H.2)},
      \item[\textup{(b)}] $(U_n\bmod m)_{n\ge 0}$ is purely periodic. 
    \end{enumerate}

Then the number of states of the trim minimal automaton $\mathcal{A}_{U,m}$ of the language $$0^*\rep_U(m\mathbb{N})$$ from which infinitely many words are accepted is
$$(\#\mathcal{C}_U) S_{U,m}.$$
\end{theorem}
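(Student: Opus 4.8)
The plan is to build the automaton $\mathcal{A}_{U,m}$ as a product‐like construction tracking two pieces of information: the state reached in $\mathcal{A}_U$ (so that we only consider greedy representations) and the residue class modulo $m$ of the value read so far, but with the twist that the modular information must be read in a way compatible with the linear recurrence. Concretely, for a word $w = w_{\ell-1}\cdots w_0$, reading it left-to-right in $\mathcal{A}_U$ from the initial state gives a state $\delta_U(q_{U,0},w)$, and $w \in 0^*\rep_U(m\mathbb{N})$ iff this state is final \emph{and} $\val_U(w) \equiv 0 \pmod m$. The difficulty is that when we read a prefix $w_{\ell-1}\cdots w_j$, the contribution of the already-seen letters to $\val_U(w)$ is $\sum_{i} w_{j+i} U_i$ after a shift — so the relevant modular data after reading a prefix is the tuple of partial sums against the "shifted" sequences $(U_{n+j}\bmod m)_n$ for $j$ in a bounded range, which by Definition~\ref{def:k} and the Remark is captured by a $k$-tuple $\mathbf{b} \in \{0,\ldots,m-1\}^k$ lying in the image of $H_k$. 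So I would first set up the correct notion of "modular state": the equivalence class of prefixes having the same vector $(\val_U(\text{prefix}\cdot 0^i) \bmod m)_{0\le i<k}$, and show this vector always lies in the image of $H_k$ (using pure periodicity of $(U_n \bmod m)$ from hypothesis (b) to handle arbitrarily long prefixes, in particular prefixes that are all zeros), giving at most $S_{U,m}$ such modular states.

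Next I would prove the upper bound $\#\mathcal{A}_{U,m} \le (\#\mathcal{C}_U)\, S_{U,m}$ for the sub-automaton from which infinitely many words are accepted. The key observation is that a Myhill–Nerode class of $0^*\rep_U(m\mathbb{N})$ from which infinitely many words are accepted must correspond to prefixes $w$ for which $\delta_U(q_{U,0},w)$ lies in the strongly connected component $\mathcal{C}_U$: by Theorem~\ref{the:descriptionNU}(ii) (and hypothesis (H.1), which says $\mathcal{C}_U$ is the \emph{only} non-trivial strongly connected component), states outside $\mathcal{C}_U$ can only reach finitely many more accepted words, since from them one cannot return to a non-trivial cycle. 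Then I would show that the Myhill–Nerode class of such a $w$ is determined exactly by the pair $(\delta_U(q_{U,0},w),\ \mathbf{b}_w)$ where $\mathbf{b}_w$ is the modular $k$-tuple: given the $\mathcal{A}_U$-state $p$, the future language $w^{-1}(0^*\rep_U(m\mathbb{N}))$ consists of those $x$ with $\delta_U(p,x)$ final and $\val_U(wx) \equiv 0$, and the latter condition depends on $w$ only through $\mathbf{b}_w$ because, by the linear recurrence, $\val_U(wx) \bmod m$ is an affine function of $x$ whose "constant part" is a fixed linear functional of $(\val_U(w\cdot 0^i)\bmod m)_{0 \le i < k}$. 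This yields at most $(\#\mathcal{C}_U) S_{U,m}$ classes among those with infinitely many accepted continuations.

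For the matching lower bound I would exhibit, for each pair $(p,\mathbf{b})$ with $p \in \mathcal{C}_U$ and $\mathbf{b}$ in the image of $H_k$, a reachable prefix realizing it, and then show distinct pairs give inequivalent classes. Reachability of the $\mathcal{A}_U$-state $p$: since $\mathcal{C}_U$ is strongly connected and contains the initial state (Theorem~\ref{the:descriptionNU}(i)), $p$ is reachable; to additionally realize a prescribed $\mathbf{b}$, I would append a suitable loop around $\mathcal{C}_U$ that returns to $p$ while adjusting the value modulo $m$ appropriately — here one uses that looping around $\mathcal{C}_U$ eventually passes through $q_{U,0}$ (Theorem~\ref{the:descriptionNU}(ii)) so the value accumulated can be made to run over a full coset of the image of $H_k$, and solvability of $H_k \mathbf{x} \equiv \mathbf{b}$ is exactly the condition defining $S_{U,m}$. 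Separating two pairs $(p,\mathbf{b}) \ne (p',\mathbf{b}')$: if $\mathbf{b}\ne\mathbf{b}'$, find $x$ with $\delta_U(p,x)$ and $\delta_U(p',x)$ both final (e.g. $x = 0^n$ for large $n$, using Theorem~\ref{the:descriptionNU}(ii)) but with $\val_U(w x) \not\equiv \val_U(w' x) \pmod m$ — possible because the two modular tuples differ and the dependence on $x=0^n$ of the residue is via a $k$-dimensional linear map that is injective on the image of $H_k$ by choice of $k$; if $p \ne p'$ but $\mathbf{b}=\mathbf{b}'$, use hypothesis (H.2) to find a word $x_{pp'}$ that sends exactly one of $p,p'$ out of $\mathcal{C}_U$, then extend by zeros so that one side reaches a final state of $\mathcal{A}_U$ with value $\equiv 0 \bmod m$ and the other is stuck in a non-accepting dead region — care is needed because leaving $\mathcal{C}_U$ does not immediately mean "dead", so one must argue using (H.1) that the only way to accept infinitely often is to stay in $\mathcal{C}_U$, and then adjust the zero-padding length to fix the residue on the surviving branch. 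I expect this last separation argument — simultaneously controlling the $\mathcal{A}_U$-component via (H.2) and the modular component via the periodicity in (b), and checking the counted states really are exactly those with infinitely many accepted continuations — to be the main obstacle, since it is where all four hypotheses (a),(b),(H.1),(H.2) and the structural facts of Theorem~\ref{the:descriptionNU} must be combined delicately.
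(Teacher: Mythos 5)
Your overall architecture coincides with the paper's: identify the states of $\mathcal{A}_{U,m}$ accepting infinitely many words with pairs consisting of a state of $\mathcal{C}_U$ and a residue tuple $(\val_U(w0^i)\bmod m)_{0\le i<k}$ lying in the image of $H_k$ modulo $m$, obtain the upper bound from the recurrence (your ``affine dependence'' claim is the paper's Lemma~\ref{lem:cn} plus equation~\eqref{eq:bu1}), and obtain the lower bound by realizing and separating all such pairs. But the separation step, which you yourself flag as the crux, is where the argument genuinely breaks. When $\mathbf{b}_w\neq\mathbf{b}_{w'}$ you propose a witness $x$ (suggesting $x=0^n$) with both images final and $\val_U(wx)\not\equiv\val_U(w'x)\pmod m$; this does not distinguish $w$ from $w'$ for $0^*\rep_U(m\N)$, since membership requires the value to be $\equiv 0\pmod m$ and both residues may well be nonzero (say $1$ and $2$ modulo $3$), in which case neither word is accepted and $x$ separates nothing. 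Likewise, in the case $p\neq p'$ handled via (H.2), you claim you can ``adjust the zero-padding length to fix the residue on the surviving branch''; appending zeros only cycles the residue of the surviving word through the at most $p$ values $\val_U(wx_{pp'}0^s)\bmod m$, $0\le s<p$, and nothing guarantees that $0$ occurs among them. So in both cases the proposed witnesses need not exist.

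The missing device --- the one the paper uses in Proposition~\ref{pro:main} and again in the lower-bound construction --- is to append \emph{nonzero} digits at positions controlled modulo the period $p$: blocks of the form $0^{rp}0^{p-1}1$, each of length a multiple of $p$, so that by pure periodicity (hypothesis (b)) every appended $1$ contributes exactly $U_0=1$ modulo $m$, while Theorem~\ref{the:descriptionNU}~(ii)--(iv) allows the $r$'s to be chosen large enough that the whole word remains greedy. Repeating such a block $i$ times with $i\equiv-\val_U(wx)\pmod m$ drives the value of the chosen branch to $0$ modulo $m$, while the other branch is excluded either because it has left $\mathcal{C}_U$ (and by (H.1) accepts only words of bounded length, so one additionally pads beyond that bound) or because its residue stays nonzero. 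The same block mechanism, with the $1$ placed at position $j-1$ modulo $p$ (blocks $0^{pt}0^{p-j}10^{j-1}$), is also what substantiates your unproved assertion in the lower bound that the accumulated value ``can be made to run over a full coset of the image of $H_k$'': as written you assert that realization rather than prove it, and it is precisely where the paper carries out its explicit coordinate-by-coordinate construction of the words $w_1,\ldots,w_k$ before padding with $0^{(r+r')p-|u|}$ and appending $u$.
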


From now on we fix an integer $m\ge 2$ and a numeration system $U=(U_n)_{n\ge 0}$ satisfying the recurrence relation \eqref{eq:rec} 
and such that $\N$ is $U$-recognizable.

\begin{definition}
     We define a relation $\equiv_{U,m}$ over $A_U^*$.
    For all $u,v\in A_U^*$, $$u\equiv_{U,m} v\Leftrightarrow\left\{
    \begin{array}{l}
        u\sim_{0^*\rep_U(\mathbb{N})}v \quad \text{ and}\\
\forall i\in\{0,\ldots,k_{U,m}-1\},\ \val_U(u0^i)\equiv\val_U(v0^i)\pmod{m}\\
    \end{array}\right.$$
where $\sim_{0^*\rep_U(\mathbb{N})}$ is the Myhill-Nerode equivalence
for the language $0^*\rep_U(\mathbb{N})$ accepted by $\mathcal{A}_U$.
\end{definition}

\begin{lemma}\label{lem:cn}
    Let $u,v,x\in A_U^*$. If $u\equiv_{U,m} v$ and
    $ux,vx\in\rep_U(\mathbb{N})$, then $ux\equiv_{U,m}vx$ and in
    particular, $\val_U(ux)\equiv\val_U(vx)\pmod{m}$.
\end{lemma}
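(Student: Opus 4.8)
The plan is to check the two defining conditions of $ux\equiv_{U,m}vx$ one at a time. The Myhill-Nerode part, $ux\sim_{0^*\rep_U(\N)}vx$, is immediate: the Myhill-Nerode congruence is a right congruence, so from $u\sim_{0^*\rep_U(\N)}v$ we get $ux\sim_{0^*\rep_U(\N)}vx$ by appending the common suffix $x$. Hence the whole substance of the lemma lies in the arithmetic conditions $\val_U(ux0^i)\equiv\val_U(vx0^i)\pmod m$ for $i\in\{0,\ldots,k-1\}$, where $k=k_{U,m}$; the ``in particular'' assertion is just the case $i=0$.

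First I would reduce these congruences to statements about $u$ and $v$ alone. Put $\ell=|x|$. Since appending $0$'s to a word merely shifts its digits, one has the identity $\val_U(wy0^i)=\val_U(w0^{|y|+i})+\val_U(y0^i)$ for all words $w,y$, and consequently
\[\val_U(ux0^i)-\val_U(vx0^i)=\val_U(u0^{\ell+i})-\val_U(v0^{\ell+i}).\]
So it suffices to prove the following: if $\val_U(u0^j)\equiv\val_U(v0^j)\pmod m$ for every $j\in\{0,\ldots,k-1\}$, then the same congruence holds for every $j\ge 0$; we then apply it with $j=\ell+i$, $i\in\{0,\ldots,k-1\}$.

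To prove this extension, prepend leading $0$'s to the shorter of $u$ and $v$ so that both have a common length $N$ (changing neither value), and write $u=u_{N-1}\cdots u_0$, $v=v_{N-1}\cdots v_0$. Then $\val_U(u0^j)-\val_U(v0^j)=\sum_{t=0}^{N-1}(u_t-v_t)U_{t+j}$, so the function $h\colon j\mapsto\bigl(\val_U(u0^j)-\val_U(v0^j)\bigr)\bmod m$ is a fixed $\mathbb{Z}$-linear combination, taken modulo $m$, of the shifted sequences $(U_{t+j})_{j\ge 0}$. By Definition~\ref{def:k}, the sequence $(U_n\bmod m)_{n\ge0}$ satisfies a linear recurrence of length $k=k_{U,m}$ valid for all $n\ge0$; hence each of its shifts satisfies that recurrence for all indices $\ge0$, and therefore so does any $\mathbb{Z}$-linear combination of those shifts modulo $m$. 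Thus $h$ satisfies a linear recurrence of length $k$, and by hypothesis $h(0)=\cdots=h(k-1)=0$; a one-line induction on $j$ then gives $h(j)=0$ for all $j\ge0$. Combining this with the reduction above and the Myhill-Nerode part yields $ux\equiv_{U,m}vx$, and in particular $\val_U(ux)\equiv\val_U(vx)\pmod m$.

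The only point requiring care is the one just invoked: that the minimal-length recurrence satisfied by $(U_n\bmod m)$ holds already from index $0$ (equivalently, that the first $k$ values of $h$ pin down $h$ everywhere), which is precisely what the Hankel-determinant description of $k_{U,m}$ in Definition~\ref{def:k} provides; the rest is bookkeeping with leading and trailing zeros. Incidentally, the greedy hypothesis $ux,vx\in\rep_U(\N)$ does not seem to be used in this argument and is presumably stated only to match the context in which the lemma will be applied.
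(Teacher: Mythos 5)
Your proof is correct and follows essentially the same route as the paper: the right-congruence property handles the Myhill--Nerode part, the identity $\val_U(ux0^i)=\val_U(u0^{|x|+i})+\val_U(x0^i)$ reduces everything to congruences between $\val_U(u0^j)$ and $\val_U(v0^j)$, and these are propagated from the first $k$ indices to all $j\ge 0$ by the linear recurrence (the paper iterates the recurrence coefficients applied to the $\val_U(u0^i)$, which is the same induction you phrase via the sequence $h$). Your explicit use of the minimal recurrence of $(U_n\bmod m)$ and your observation that the greedy hypothesis is not actually needed are both accurate and, if anything, slightly more careful than the paper's write-up.
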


\begin{proof}
 By assumption, for all
$i\in\{0,\ldots,k-1\}$, $\val_U(u0^i)\equiv\val_U(v0^i)\pmod{m}$. Hence, for all
$i\in\{0,\ldots,k-1\}$, $a_i\val_U(u0^i)\equiv a_i\val_U(v0^i)\pmod{m}$.
Assume that $u=u_{\ell-1}\cdots u_0$. Note that
$$\sum_{i=0}^{k-1}a_i\val_U(u0^i)=\sum_{j=0}^{\ell-1} u_j \sum_{i=0}^{k-1} a_i U_{j+i}=
\sum_{j=0}^{\ell-1} u_j U_{j+k}=\val_U(u0^k).$$ Therefore, we can conclude
that $\val_U(u0^k)\equiv\val_U(v0^k)\pmod{m}$. Iterating this
argument, we have, for all $n\ge 0$,
\begin{equation}
    \label{eq:mod}
    \val_U(u0^n)\equiv\val_U(v0^n)\pmod{m}.
\end{equation}
Since the Myhill-Nerode
relation is a right congruence, we have that
$$ux\sim_{0^*\rep_U(\mathbb{N})}vx.$$ Let $i\in\{0,\ldots,k-1\}$. From
\eqref{eq:mod}, we deduce that
$$\val_U(u0^{|x|+i})+\val_U(x0^i)
\equiv\val_U(v0^{|x|+i})+\val_U(x0^i)\pmod{m}$$
and therefore $\val_U(ux0^i)\equiv\val_U(vx0^i)\pmod{m}$.
\end{proof}

\begin{proposition}\label{pro:main}
    Assume that the numeration system $U$ satisfies the
    assumptions of Theorem~\ref{the:main}.  Let $u,v\in A_U^*$ be such
    that $\delta_U(q_{U,0},u)$ and $\delta_U(q_{U,0},v)$ belong to $\mathcal{C}_U$. We
    have $u\equiv_{U,m} v$ if and only if
    $u\sim_{0^*\rep_U(m\mathbb{N})} v$.
\end{proposition}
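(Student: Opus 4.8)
The plan is to prove the two implications separately; the forward one is routine, and the backward one is where \textup{(H.1)}, \textup{(H.2)} and hypothesis \textup{(b)} of Theorem~\ref{the:main} really enter.

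\emph{Forward implication.} Suppose $u\equiv_{U,m}v$. Since $0^*\rep_U(m\mathbb{N})=\{w\in 0^*\rep_U(\mathbb{N})\mid\val_U(w)\equiv 0\pmod m\}$, I only need that $u\sim_{0^*\rep_U(\mathbb{N})}v$, which is part of the definition of $\equiv_{U,m}$, together with $\val_U(ux)\equiv\val_U(vx)\pmod m$ for every $x$. For the latter, the computation in the proof of Lemma~\ref{lem:cn} shows that $u\equiv_{U,m}v$ forces $\val_U(u0^n)\equiv\val_U(v0^n)\pmod m$ for all $n$, whence $\val_U(ux)=\val_U(u0^{|x|})+\val_U(x)\equiv\val_U(v0^{|x|})+\val_U(x)=\val_U(vx)$. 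Then $ux\in 0^*\rep_U(m\mathbb{N})$ iff $vx\in 0^*\rep_U(m\mathbb{N})$, so $u\sim_{0^*\rep_U(m\mathbb{N})}v$. This direction uses none of \textup{(H.1)}, \textup{(H.2)}, \textup{(b)} nor the hypothesis $\delta_U(q_{U,0},u),\delta_U(q_{U,0},v)\in\mathcal{C}_U$.

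\emph{Backward implication.} Assume $u\sim_{0^*\rep_U(m\mathbb{N})}v$ with $\delta_U(q_{U,0},u),\delta_U(q_{U,0},v)\in\mathcal{C}_U$. The key auxiliary fact is a \emph{completion lemma}: if $\delta_U(q_{U,0},w)\in\mathcal{C}_U$ then $wy\in 0^*\rep_U(m\mathbb{N})$ for some $y$. Indeed, Theorem~\ref{the:descriptionNU}(ii) gives an $n$ with $\delta_U(q_{U,0},w0^n)=q_{U,0}$; now append a greedy representation $z$ of some length $s$ with $U_s\ge m$ and $\val_U(z)\equiv-\val_U(w0^{n+s})\pmod m$ — such $z$ exists because greedy representations of length $\le s$ take all values in $\{0,\dots,U_s-1\}$, hence every residue modulo $m$; since $w0^nz$ passes through $q_{U,0}$ before reading $z$, it lies in $0^*\rep_U(\mathbb{N})$, and $\val_U(w0^nz)=\val_U(w0^{n+s})+\val_U(z)\equiv 0\pmod m$. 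With this, I check the two clauses of $\equiv_{U,m}$. First, $\delta_U(q_{U,0},u)=\delta_U(q_{U,0},v)$ (which gives $u\sim_{0^*\rep_U(\mathbb{N})}v$ by minimality of $\mathcal{A}_U$): if not, \textup{(H.2)} yields $x_{pq}$ steering, say, $u$ into $\mathcal{C}_U$ and $v$ out of it; by \textup{(H.1)} a run that has left $\mathcal{C}_U$ can never reach a final state again, so $vx_{pq}$ has no extension in $0^*\rep_U(\mathbb{N})$ and a fortiori none in $0^*\rep_U(m\mathbb{N})$, while the completion lemma extends $ux_{pq}$ into $0^*\rep_U(m\mathbb{N})$, contradicting $u\sim_{0^*\rep_U(m\mathbb{N})}v$. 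Second, $\val_U(u0^i)\equiv\val_U(v0^i)\pmod m$ for $0\le i<k_{U,m}$: the sequence $n\mapsto\val_U(u0^n)-\val_U(v0^n)\bmod m$ is a fixed $\mathbb{Z}$-linear combination of shifts of $(U_n\bmod m)_{n\ge 0}$, which is purely periodic by \textup{(b)}, hence is itself purely periodic of some period $\pi$; so the set $P$ of indices at which it is nonzero is either empty — in which case the required congruences hold for all $n$ — or purely periodic, infinite, and stable under adding $\pi$. In the latter case, pick $n_0\in P$ exceeding the two thresholds given by Theorem~\ref{the:descriptionNU}(ii) for $\delta_U(q_{U,0},u)$ and $\delta_U(q_{U,0},v)$, so $u0^{n_0}$ and $v0^{n_0}$ both reach $q_{U,0}$; pick a multiple $s$ of $\pi$ with $U_s\ge m$, so $n_0+s\in P$ and some greedy $z$ of length $s$ has $\val_U(z)\equiv-\val_U(u0^{n_0+s})\pmod m$. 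Then $x:=0^{n_0}z$ gives $ux\in 0^*\rep_U(m\mathbb{N})$, while $\val_U(vx)=\val_U(v0^{n_0+s})+\val_U(z)\equiv\val_U(v0^{n_0+s})-\val_U(u0^{n_0+s})\not\equiv 0\pmod m$, so $vx\notin 0^*\rep_U(m\mathbb{N})$; again a contradiction, so $P=\emptyset$. Hence $u\equiv_{U,m}v$.

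I expect the backward implication, and within it the separation argument, to be the main obstacle: one must distinguish two distinct states of $\mathcal{C}_U$, or two words with different residue profiles, by an extension that for $u$ still lands inside the sparse language $0^*\rep_U(m\mathbb{N})$. This is exactly where the hypotheses are needed — \textup{(H.2)} separates the states, \textup{(H.1)} makes the ``bad'' side really die, and pure periodicity of $(U_n\bmod m)_{n\ge 0}$ both makes the residues $\val_U(\cdot\,0^n)\bmod m$ recur for arbitrarily large $n$ and lets an appended greedy block retarget them to $0$.
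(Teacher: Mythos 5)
Your forward implication and your treatment of the residue clause in the backward implication are sound, and your retargeting device (after returning to $q_{U,0}$ via Theorem~\ref{the:descriptionNU}\textup{(ii)}, append a single greedy block $z$ of length $s$ with $U_s\ge m$ realizing any prescribed residue modulo $m$) is a legitimate and arguably simpler alternative to the paper's repeated blocks $(0^{r_jp}0^{p-1}1)$, each of which adds $1$ modulo $m$ using pure periodicity.

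However, the first clause of your backward implication has a genuine gap: you claim that by \textup{(H.1)} ``a run that has left $\mathcal{C}_U$ can never reach a final state again, so $vx_{pq}$ has no extension in $0^*\rep_U(\mathbb{N})$.'' That is false in general. \textup{(H.1)} only says that $\mathcal{C}_U$ is the unique (non-trivial) strongly connected component; the automaton $\mathcal{A}_U$ is \emph{trim}, so every state, including those outside $\mathcal{C}_U$, is co-accessible, i.e., at least one word is accepted from it. What \textup{(H.1)} really gives is that there are no cycles outside $\mathcal{C}_U$, hence only \emph{finitely many} words are accepted from $\delta_U(q_{U,0},vx_{pq})$. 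Consequently, exhibiting one completion $y'$ with $ux_{pq}y'\in 0^*\rep_U(m\mathbb{N})$ does not by itself contradict $u\sim_{0^*\rep_U(m\mathbb{N})}v$: the word $vx_{pq}$ may have a (short) completion of its own in $0^*\rep_U(m\mathbb{N})$, and you need one and the same word that is accepted after $u$ but not after $v$. The paper closes exactly this hole by letting $T$ be the length of the longest word accepted from $\delta_U(q_{U,0},vx)$ and choosing the completion of $ux$ to have length greater than $T$ while keeping its value $\equiv 0\pmod m$. Your completion lemma can be repaired to deliver this (make the $0$-block $0^n$ arbitrarily long, recomputing the residue required of $z$ for the chosen $n$), but as written the separation argument does not go through; note that the case $\mathcal{A}_U\neq\mathcal{C}_U$, where your claim fails, is precisely the situation the restriction in Theorem~\ref{the:main} to states from which infinitely many words are accepted is designed to handle.
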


\begin{proof}
From (b) the sequence $(U_n\bmod m)_{n\ge 0}$ is purely periodic, say of
    period $p$.  

%

Assume that $u\not\equiv_{U,m} v$. Our aim is to show that there
exists a word $y\in A_U^*$ that distinguishes $u$ and $v$ in the minimal automaton of $0^*\rep_U(m\mathbb{N})$, i.e., either $uy\in 0^*\rep_U(m\mathbb{N})$ and $vy\not\in 0^*\rep_U(m\mathbb{N})$,
or $uy\not\in 0^*\rep_U(m\mathbb{N})$ and $vy\in 0^*\rep_U(m\mathbb{N})$. 

As a first case, assume $u\not\sim_{0^*\rep_U(\mathbb{N})}v$. Since
$\delta_U(q_{U,0},u)$ and $\delta_U(q_{U,0},v)$ both belong to $\mathcal{C}_U$, this means
that $\delta_U(q_{U,0},u)$ and $\delta_U(q_{U,0},v)$ are two different states in
$\mathcal{C}_U$. By (H.2), without loss of generality, we may assume that there exists a word $x$
such that
$$\delta_U(q_{U,0},ux)\in\mathcal{C}_U \quad\text{ and }\quad \delta_U(q_{U,0},vx)\not\in\mathcal{C}_U.$$
Since $\mathcal{A}_U$ contains only one strongly connected component, only finitely many words may be accepted from $\delta_U(q_{U,0},vx)$. Let $T$ be the length of the
longest word accepted from $\delta_U(q_{U,0},vx)$. Let $i\in\{1,\ldots,m\}$ be such that
$\val_U(ux)+i\equiv 0\pmod{m}$. Using properties \textup{(ii)--(iv)} from Theorem~\ref{the:descriptionNU} $i$ times and the fact that $\delta_U(q_{U,0},1)$ is finite, there exist
$r_1,\ldots,r_i\ge 0$ such that the word
$$y=x(0^{r_1p}0^{p-1}1)(0^{r_2p}0^{p-1}1)\cdots (0^{r_ip}0^{p-1}1)$$
has a length larger than $T+|x|$ and is such that $uy$ is a greedy representation. Moreover, due to the periodicity of $(U_n\bmod m)_{n\ge 0}$, we have $\val_U(uy)\equiv 0 \pmod{m}$ and therefore $uy$ belongs to $0^*\rep_U(m\mathbb{N})$. Hence, the word $y$ distinguishes $u$ and $v$ for the language $0^*\rep_U(m\mathbb{N})$.

Now assume that $u\sim_{0^*\rep_U(\mathbb{N})}v$ and there exists
$j\in\{0,\ldots,k-1\}$ such that
$\val_U(u0^j)\not\equiv\val_U(v0^j)\pmod{m}$. There exists $i<m$ such
that $\val_U(u0^j)+i\equiv 0\pmod{m}$ and $\val_U(v0^j)+i\not\equiv
0\pmod{m}$. Using properties \textup{(ii)-(iv)} from Theorem~\ref{the:descriptionNU} there exist
$s_1,\ldots,s_i\ge 0$ such that the word
$$y=(0^{s_1p}0^{p-1}1)(0^{s_2p}0^{p-1}1)\cdots (0^{s_ip}0^{p-1}1)$$
distinguishes $u$ and $v$.

Consider the other implication and assume that $u\equiv_{U,m} v$. Let
$x$ be a word such that $ux\in 0^*\rep_U(m\mathbb{N})$. From Lemma~\ref{lem:cn}, we only have to show that
$vx$ is a greedy representation.  Since $v$ is a 
greedy representation and
$u\sim_{0^*\rep_U(\mathbb{N})}v$, we can conclude that $vx$ is a
greedy representation. Hence the conclusion follows.
\end{proof}

\begin{proof}[Proof of Theorem~\ref{the:main}] 
    If $u$ is a word such that $\delta_U(q_{U,0},u)$ belongs to
    $\mathcal{C}_U$, then with the same reasoning as in the proof of
    Proposition~\ref{pro:main}, there exist infinitely many words $x$
    such that $ux\in 0^*\rep_U(m\mathbb{N})$. On the other hand, by (H.1), if
    $v$ is a word such that $\delta_U(q_{U,0},v)$ does not belong to
    $\mathcal{C}_U$, there exist finitely
    many words $x$ such that $vx\in 0^*\rep_U(m\mathbb{N})$. Therefore, the number of states of the trim minimal automaton of the language $0^*\rep_U(m\mathbb{N})$ from which infinitely many words are accepted is the number of sets $u^{-1}0^*\rep_U(m\N)$ where $u$ is a word over $A_U$ such that 
$\delta_U(q_{U,0},u)$ belongs to $\mathcal{C}_U$.
Hence, as a
    consequence of Proposition~\ref{pro:main}, this number is also the number of equivalence classes
    $[u]_{\equiv_{U,m}}$ with $u$ being such that $\delta_U(q_{U,0},u)\in\mathcal{C}_U$.  What we  have to do to conclude the proof
    is therefore to count the number of such equivalence classes.

First we show that there are at most $\#\mathcal{C}_U S_{U,m}$ such classes. By definition, if $u,v\in A_U^*$ are such that $\delta_U(q_{U,0},u)\neq\delta_U(q_{U,0},v)$, then $u\not\equiv_{U,m}v$. Otherwise, $u\not\equiv_{U,m}v$ if and only if there exists $\ell<k$ such that $\val_U(u0^\ell)\not\equiv\val_U(v0^\ell)\pmod{m}$. 

Let $u=u_{r-1}\cdots u_0\in A_U^*$. 
We let $\mathbf{b}_u$ denote the $k$-tuple $(b_0,\ldots,b_{k-1})^T\in\{0,\ldots,m-1\}^k$ defined by
    \begin{equation}
        \label{eq:bu}
\forall s\in\{0,\ldots,k-1\},\ \val_U(u0^s)\equiv b_s\pmod{m}.
    \end{equation}
Using the fact that the sequence $(U_n)_{n\ge 0}$ satisfies \eqref{eq:rec}, there exist $\alpha_0,\ldots,\alpha_{k-1}$ such that 
\begin{equation}
\label{eq:bu1}\forall s\in\{0,\ldots,k-1\},\ 
\val_U(u0^s)=\sum_{i=0}^{r-1}u_iU_{i+s}=\sum_{i=0}^{k-1}\alpha_{i} U_{i+s}.
\end{equation}
Using \eqref{eq:bu} and \eqref{eq:bu1}, 
we see that the system $H_k \mathbf{x}\equiv\mathbf{b}_u\pmod{m}$ has a solution $\mathbf{x}=(\alpha_0,\ldots,\alpha_{k-1})^T$. 

If $u,v\in A_U^*$ are such that $\delta_U(q_{U,0},u)=\delta_U(q_{U,0},v)$ but $u\not\equiv_{U,m}v$, then $\mathbf{b}_u\neq\mathbf{b}_v$. From the previous paragraph the systems $H_k \mathbf{x}\equiv\mathbf{b}_u\pmod{m}$ and $H_k \mathbf{x}\equiv\mathbf{b}_v\pmod{m}$ both have a solution. Therefore, there are at most $\#\mathcal{C}_U S_{U,m}$ infinite equivalence classes.

Second we show that there are at least $\#\mathcal{C}_U S_{U,m}$ such classes. Let
$\mathbf{c}=(c_0,\ldots,c_{k-1})^T\in\{0,\ldots,m-1\}^k$ be such that the system $H_k \mathbf{x}\equiv\mathbf{c}\pmod{m}$ has a solution $\mathbf{x_c}=(\alpha_0,\ldots,\alpha_{k-1})^T$. 
Let $q$ be any state in $\mathcal{C}_U$. Our aim is to build a word $y$ over $A_U$ such that $$\delta_U(q_{U,0},y)=q \text{ and }\forall s\in\{0,\ldots,k-1\},\ \val_U(y0^s)\equiv c_s\pmod{m}.$$

Since $\mathcal{A}_U$ is
accessible, there exists a word $u\in A_U^*$ such that $\delta_U(q_{U,0},u)=q$. With this word $u$ is associated a unique $\mathbf{b}_u=(b_0,\ldots,b_{k-1})^T\in\{0,\ldots,m-1\}^k$ given by \eqref{eq:bu}. The system $H_k \mathbf{x}\equiv \mathbf{b}_u\pmod{m}$ has a solution denoted by $\mathbf{x}_u$.

Define $\gamma_0,\ldots,\gamma_{k-1}\in\{0,\ldots,m-1\}$ by  $\mathbf{x_c}-\mathbf{x}_u\equiv(\gamma_0,\ldots,\gamma_{k-1})^T\pmod{m}$. Thus
\begin{equation}\label{eq:gamma}
 H_k (\mathbf{x}_c-\mathbf{x}_u)\equiv\mathbf{c}-\mathbf{b_u}\pmod{m}.
\end{equation}
 
Using properties \textup{(ii)--(iv)} from Theorem~\ref{the:descriptionNU} from the initial state $q_{U,0}$, there exist $t_{1,1},\ldots,t_{1,\gamma_0}$ such that the word 
$$w_1=(0^{pt_{1,1}} 0^{p-1} 1)\cdots (0^{pt_{1,\gamma_0}} 0^{p-1} 1)$$
satisfies $\delta_U(q_{U,0},w_1)\in\mathcal{C}_U\cap F_U$ and $\val_U(w_1)\equiv\gamma_0 U_0\pmod{m}$. We can iterate this construction. For $j\in\{2,\ldots,k\}$, there exist $t_{j,1},\ldots,t_{j,\gamma_j}$ such that the word 
$$w_j=w_{j-1}(0^{pt_{j,1}} 0^{p-j} 1 0^{j-1})\cdots (0^{pt_{j,\gamma_j}} 0^{p-j} 1 0^{j-1})$$
satisfies $\delta_U(q_{U,0},w_j)\in\mathcal{C}_U\cap F_U$ and $\val_U(w_j)\equiv\val_U(w_{j-1})+\gamma_{j-1} U_{j-1}\pmod{m}$. Consequently, we have  
$$\val_U(w_k)\equiv\gamma_{k-1}U_{k-1}+\cdots+\gamma_0U_0\pmod{m}.$$
Now take $r$ and $r'$ large enough such that $\delta_U(q_{U,0},w_{k}0^{rp})=q_{U,0}$ and $r'p\ge|u|$. Such an $r$ exists by \textup{(ii)} in Theorem~\ref{the:descriptionNU}. The word
$$y=w_k 0^{(r+r')p-|u|}u$$
is such that $\delta_U(q_{U,0},y)=\delta_U(q_{U,0},u)=q$ and taking into account the periodicity of $(U_n\bmod{m})_{n\ge 0}$, we get 
$$\val_U(y)\equiv\val_U(w_k)+\val_U(u)\pmod{m}.$$
In view of \eqref{eq:gamma}, we obtain
$$\forall s\in\{0,\ldots,k-1\},\ \val_U(y0^s)\equiv\sum_{i=0}^{k-1}\gamma_iU_{i+s}+b_s \equiv c_s-b_s+b_s=c_s
\pmod{m}.$$
\end{proof}

\begin{corollary}\label{cor:strong}
Assume that the numeration system $U$ satisfies the assumptions of Theorem~\ref{the:main}. Assume moreover that $\mathcal{A}_U$ is strongly connected (i.e. $\mathcal{A}_U=\mathcal{C}_U$). Then the number of states of the trim minimal automaton of the language $0^*\rep_U(m\mathbb{N})$ is
$(\#\mathcal{C}_U) S_{U,m}$.
\end{corollary}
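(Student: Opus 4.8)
The strategy is to show that, once $\mathcal{A}_U=\mathcal{C}_U$, the clause ``from which infinitely many words are accepted'' in Theorem~\ref{the:main} in fact describes \emph{every} state of the trim minimal automaton of $0^*\rep_U(m\mathbb{N})$; the corollary is then nothing more than Theorem~\ref{the:main} read without that clause. First note that $0^*\rep_U(m\mathbb{N})$ is infinite, since $m\mathbb{N}$ is infinite and $\rep_U$ is injective on $\mathbb{N}$, so the trim minimal automaton $\mathcal{A}_{U,m}$ is genuinely defined and nontrivial.

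Next I would take an arbitrary state $s$ of $\mathcal{A}_{U,m}$ and locate it inside $\mathcal{C}_U$. Being accessible, $s$ equals $u^{-1}0^*\rep_U(m\mathbb{N})$ for some word $u$; being co-accessible, this quotient is nonempty, so there is a word $x$ with $ux\in 0^*\rep_U(m\mathbb{N})\subseteq 0^*\rep_U(\mathbb{N})$. Hence $\delta_U(q_{U,0},u)$ is a genuine state of $\mathcal{A}_U$, and since $\mathcal{A}_U=\mathcal{C}_U$ this state lies in $\mathcal{C}_U$. I would then invoke the opening observation of the proof of Theorem~\ref{the:main}: whenever $\delta_U(q_{U,0},u)\in\mathcal{C}_U$, the construction used in Proposition~\ref{pro:main} produces infinitely many words $x$ with $ux\in 0^*\rep_U(m\mathbb{N})$. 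Applying this to each state $s$ as above shows that every state of $\mathcal{A}_{U,m}$ is a state from which infinitely many words are accepted, so the count furnished by Theorem~\ref{the:main}, namely $(\#\mathcal{C}_U)\,S_{U,m}$, is the total number of states of $\mathcal{A}_{U,m}$.

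The only point I would treat with care — the ``obstacle'', such as it is — is the first step: ensuring that the trim minimal automaton of $0^*\rep_U(m\mathbb{N})$ carries no ``extra'' state, that is, no non-accepting sink and no state reachable only by words that fail to be prefixes of greedy representations, and hence that every state really does correspond to some $u$ with $\delta_U(q_{U,0},u)\in\mathcal{C}_U$. This is exactly where trimness and the inclusion $0^*\rep_U(m\mathbb{N})\subseteq 0^*\rep_U(\mathbb{N})$, together with the hypothesis $\mathcal{A}_U=\mathcal{C}_U$, are used. Once this is in place the corollary is an immediate reading of Theorem~\ref{the:main}.
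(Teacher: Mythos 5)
Your proposal is correct and follows essentially the same route as the paper's own proof: both reduce the corollary to the opening observation in the proof of Theorem~\ref{the:main} (that $\delta_U(q_{U,0},u)\in\mathcal{C}_U$ forces infinitely many accepted continuations of $u$ in $0^*\rep_U(m\mathbb{N})$), combined with $\mathcal{A}_U=\mathcal{C}_U$ to conclude that every state of $\mathcal{A}_{U,m}$ accepts infinitely many words. Your extra care with trimness (accessibility and co-accessibility ruling out an empty quotient) is a slightly more explicit rendering of what the paper states tersely, not a different argument.
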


\begin{proof}
We use the same argument as in the beginning of the proof of Theorem~\ref{the:main}. Since $\mathcal{A}_U=\mathcal{C}_U$, all of the sets $u^{-1}0^*\rep_U(m\N)$ are infinite. Hence, infinitely many words are accepted from any state of $\mathcal{A}_{U,m}$.
\end{proof}


\begin{corollary}\label{cor:bonacci}
    Let $\ell\ge 2$. For the $\ell$-bonacci numeration system
    $U=(U_n)_{n\ge 0}$ defined by $U_{n+\ell}=U_{n+\ell-1}+\cdots+U_n$
    and $U_i=2^i$ for all $i<\ell$, the number of states of the trim
    minimal automaton of the language $0^*\rep_U(m\mathbb{N})$ is $\ell.m^\ell$.
\end{corollary}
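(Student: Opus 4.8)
The plan is to apply Theorem~\ref{the:main} (or rather Corollary~\ref{cor:strong}) to the $\ell$-bonacci system, which requires verifying hypotheses (a) and (b) and then computing the two quantities $\#\mathcal{C}_U$ and $S_{U,m}$. For hypothesis (a): from Example~\ref{exa:nbonacci}, $\mathbb{N}$ is $U$-recognizable and $\mathcal{A}_U$ is the $\ell$-state automaton accepting all words avoiding the factor $1^\ell$; this automaton is strongly connected (from any state one reads a $0$ to return toward $q_{U,0}$, and a suitable word $0\cdots 0 1 \cdots$ reaches any other state), so it equals $\mathcal{C}_U$, giving (H.1) immediately, and $\#\mathcal{C}_U = \ell$. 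Since $\mathcal{A}_U = \mathcal{C}_U$, hypothesis (H.2) is vacuous (there are no states outside $\mathcal{C}_U$), and Corollary~\ref{cor:strong} applies, so the answer will be exactly $\ell \cdot S_{U,m}$; thus it remains only to show $S_{U,m} = m^{\ell}$.

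For hypothesis (b) and the computation of $k_{U,m}$, I would argue that the $\ell$-bonacci recurrence $U_{n+\ell} = U_{n+\ell-1} + \cdots + U_n$, read modulo $m$, has companion matrix $M$ over $\mathbb{Z}/m\mathbb{Z}$ whose determinant is $\pm a_0 = \pm 1$, hence $M$ is invertible mod $m$; therefore the sequence $(U_n \bmod m)_{n \ge 0}$ is purely periodic (invertibility of the shift lets one run the recurrence backwards), establishing (b). Moreover, invertibility of $M$ mod $m$ means the $\ell$ initial values $U_0, \ldots, U_{\ell-1}$ cannot satisfy any shorter recurrence mod $m$: concretely, $\det H_\ell \equiv \pm 1 \pmod m$ because, using column operations coming from the recurrence, $H_\ell$ can be reduced to a matrix whose determinant is a power of $\det M = \pm 1$ — more directly, since $U_0 = 1, U_1 = 2, \ldots, U_{\ell-1} = 2^{\ell-1}$, one can compute $\det H_\ell$ and check it is odd and, more generally, a unit mod $m$; alternatively one invokes Definition~\ref{def:k} with the observation that the minimal-length recurrence mod $m$ has length exactly $\ell$ since the characteristic polynomial $x^\ell - x^{\ell-1} - \cdots - 1$ stays separable-ish / the companion matrix is invertible. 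Hence $k_{U,m} = \ell$ and $H_k = H_\ell$ is invertible mod $m$.

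With $H_k$ invertible modulo $m$, the system $H_k \mathbf{x} \equiv \mathbf{b} \pmod m$ has a (unique) solution for \emph{every} $\mathbf{b} \in \{0, \ldots, m-1\}^{\ell}$, so $S_{U,m} = m^{\ell}$. Combining with $\#\mathcal{C}_U = \ell$ and Corollary~\ref{cor:strong}, the number of states of the trim minimal automaton of $0^*\rep_U(m\mathbb{N})$ is $\ell \cdot m^{\ell}$, as claimed. The main obstacle I anticipate is the clean verification that $\det H_\ell$ is a unit modulo $m$ for the specific initial conditions $U_i = 2^i$: one must be careful that this holds for \emph{all} $m \ge 2$ simultaneously, which is why it is important to tie $\det H_\ell$ to the determinant of the companion matrix (equal to $\pm 1$) rather than trying a case analysis on $m$; the bookkeeping relating the Hankel matrix $H_\ell$ of a linearly recurrent sequence to powers of its companion matrix (a standard but slightly fiddly identity, essentially $H_\ell = V^T D V$ or a telescoping product involving $M$) is the one routine computation worth spelling out carefully.
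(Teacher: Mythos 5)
Your overall structure is the same as the paper's: verify that $\mathcal{A}_U$ is the strongly connected $\ell$-state automaton (so Corollary~\ref{cor:strong} applies with $\#\mathcal{C}_U=\ell$), show $S_{U,m}=m^\ell$, and multiply. The paper does exactly this, asserting that $\det H_\ell=\pm1$ so that $H_\ell\mathbf{x}\equiv\mathbf{b}\pmod m$ is solvable for every $\mathbf{b}$. Your treatment of hypothesis (b) is also fine: $a_0=1$ lets you run the recurrence backwards modulo $m$, so $(U_n\bmod m)_{n\ge0}$ is purely periodic.

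The genuine gap is in how you justify the pivotal facts $k_{U,m}=\ell$ and $\det H_\ell$ being a unit mod $m$. You propose to deduce them from the invertibility of the companion matrix $M$ (i.e.\ from $\det M=\pm a_0=\pm1$), explicitly preferring this to any computation tied to the initial conditions. But the Hankel determinant and the minimal recurrence length modulo $m$ are \emph{not} controlled by $\det M$: they depend essentially on the initial values. The paper's own Example~\ref{exa:laurent}/\ref{exa:laurent2} is a counterexample to both inferences you make: there $U_{n+2}=2U_{n+1}+U_n$ with $a_0=1$, so $M$ is invertible mod every $m$, yet $k_{U,2}=1<2$ and $\det H_2=-2$, so that $S_{U,4}=8\neq 4^2$. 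Likewise the factorization $H_\ell=V^T D V$ you allude to involves the diagonal of coefficients $c_i$ determined by the initial conditions, so its determinant is not a power of $\det M$. The correct route (and what the paper asserts) is that for the specific initial values $U_i=2^i$ one has $\det H_\ell=\pm1$ \emph{as an integer} (checked by explicit row/column reduction of the Hankel matrix, not by any companion-matrix identity); this single integer statement then gives, for every $m\ge2$ simultaneously, both $k_{U,m}=\ell$ (via Definition~\ref{def:k}, since $\det H_\ell\not\equiv0\pmod m$ while $k\le\ell$ because the sequence satisfies the length-$\ell$ recurrence mod $m$) and the solvability of $H_\ell\mathbf{x}\equiv\mathbf{b}\pmod m$ for all $m^\ell$ right-hand sides, hence $S_{U,m}=m^\ell$. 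As written, your argument would ``prove'' $S_{U,m}=m^{k}$ for Example~\ref{exa:laurent} as well, which is false, so the step needs to be replaced by the direct evaluation of $\det H_\ell$ for the $\ell$-bonacci initial conditions.
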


 \begin{proof}
     First note that the trim minimal automaton of
     $0^*\rep_U(\mathbb{N})$ consists of a unique strongly connected
     component made of $\ell$ states (see Figure~\ref{fig:fibonacci}) and $\mathcal{A}_U$ satisfies all
     the required assumptions. 
The matrix
     $\mathbf{H}_\ell$ has a determinant equal to $\pm 1$.
     Therefore, for all $\mathbf{b}\in\{0,\ldots,m-1\}^\ell$, the
     system $\mathbf{H}_\ell\mathbf{x}\equiv\mathbf{b}\pmod{m}$ has a solution.
     There are $m^\ell$ such vectors $\mathbf{b}$. We conclude by using Corollary~\ref{cor:strong}.
 \end{proof}

 To build the minimal automaton of $\rep_U(m\mathbb{N})$, one can use
 Theorem~\ref{the:folk} to first have an automaton accepting the
 reversal of the words over $A_U$ whose numerical value is divisible by
 $m$. We consider the reversal representation, that is least significant
 digit first, to be able to handle the period\footnote{Another option 
   is to consider a non-deterministic finite automaton reading most
   significant digits first.} of $(U_n\bmod{m})_{n\ge 0}$. Such an
 automaton has $m$ times the length of the period of
 $(U_n\bmod{m})_{n\ge 0}$ states. Then minimizing the intersection of
 the reversal of this automaton with the automaton $\mathcal{A}_U$, we
 get the expected minimal automaton of $0^*\rep_U(m\mathbb{N})$.

 Taking advantage of Proposition~\ref{pro:main}, we get an automatic
 procedure to obtain directly the minimal automaton $\mathcal{A}_{U,m}$
 of $0^*\rep_U(m\mathbb{N})$. States of $\mathcal{A}_{U,m}$ are given
 by $(k+1)$-tuples. The state reached by reading $w$ has as first
 component the state of $\mathcal{A}_U$ reached when reading $w$ and
 the other components are
 $\val_U(w)\bmod{m},\ldots,\val_U(w0^{k-1})\bmod{m}$.

 \begin{example}
     Consider the Fibonacci numeration system and $m=3$. The states of
     $\mathcal{A}_U$ depicted in Figure~\ref{fig:fibonacci} are denoted
     by $q_0$ and $q_1$. The states of $\mathcal{A}_{U,3}$ are
     $r_0,\ldots,r_{17}$. The transition function of $\mathcal{A}_{U,3}$ is denoted by $\tau$.
 $$\begin{array}{c|c|c|c}
 w & r=(\delta_U(q_0,w),\val_U(w),\val_U(w0)) & \tau(r,0) & \tau(r,1)\\
 \hline 
 \varepsilon, 0, 10^310 & r_0=(q_0,0,0) & r_0 & r_1 \\
 1 &  r_1=(q_1,1,2) & r_2 &  \\
 10,10100 & r_2=(q_0,2,0) & r_3 & r_4 \\
 100 & r_3=(q_0,0,2) & r_5& r_6\\
 101 & r_4=(q_1,1,1) & r_7 & \\
 1000,(10)^3 & r_5=(q_0,2,2) & r_8 & r_9 \\
 1001 & r_6=(q_1,0,1) & r_{10} & \\
 1010, (100)^2 & r_7=(q_0,1,2) & r_2 & r_{11}\\
 10^4,10^410 & r_8=(q_0,2,1)     & r_{12} & r_{13} \\
 10^31 & r_9=(q_1,0,0)     & r_0 & \\
 10010,10^7 & r_{10}=(q_0,1,1)  & r_7 & r_{14} \\
 10101 & r_{11}=(q_1,0,2)  & r_5 & \\
 10^5 & r_{12}=(q_0,1,0)   & r_{15} & r_{16} \\
 10^41 & r_{13}=(q_1,2,2)  & r_8 & \\
 100101 & r_{14}=(q_1,2,1) & r_{12} & \\
 10^6 & r_{15}=(q_0,0,1)   & r_{10} & r_{17} \\
 10^51 & r_{16}=(q_1,1,0)  & r_{15} & \\
 10^61 & r_{17}=(q_1,2,0) & r_3 & \\ 
 \end{array}$$
 \end{example}

\begin{definition}\label{def:bertrand}
    A numeration system $U=(U_n)_{n\ge 0}$ is a {\em Bertrand
      numeration system} if, for all $w\in A_U^+$,
    $w\in\rep_U(\mathbb{N})\Leftrightarrow w0\in\rep_U(\mathbb{N})$.
\end{definition}

All the systems presented in Examples~\ref{exa:1},~\ref{exa:nbonacci}
and \ref{exa:laurent} are Bertrand numeration systems. As a
consequence of Parry's Theorem \cite{parry,Lot} and Bertrand's theorem
\cite{Be,Lot}, the canonical automaton $\mathcal{A}_\beta$ associated
with $\beta$-expansions is a trim minimal automaton (therefore, any
two distinct states are distinguished) which is moreover strongly
connected. The following result is therefore obvious.

\begin{proposition}\label{pro:ber}
    Let $U$ be the Bertrand numeration system associated with a non-integer Parry
    number $\beta>1$. The set $\mathbb{N}$ is $U$-recognizable and the
    trim minimal automaton $\mathcal{A}_U$ of $0^*\rep_U(\mathbb{N})$
    fulfills properties \textup{(H.1)} and \textup{(H.2)}.
\end{proposition}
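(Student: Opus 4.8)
The plan is to verify the two properties (H.1) and (H.2) directly from the structural facts recalled just before the statement, namely Parry's and Bertrand's theorems. First I would recall the precise situation: for a Parry number $\beta>1$, the canonical automaton $\mathcal{A}_\beta$ reading $\beta$-expansions (with most significant digit first) is finite exactly because $\beta$ is Parry, and $\mathcal{N}$ is $U$-recognizable in the associated Bertrand numeration system $U$; moreover by Bertrand's theorem the numeration language $\rep_U(\mathbb{N})$ coincides (up to leading zeroes) with the language of finite factors of $\beta$-expansions, so $\mathcal{A}_U$ is precisely (the trim minimal version of) $\mathcal{A}_\beta$. The key input, stated in the paragraph preceding Proposition~\ref{pro:ber}, is that $\mathcal{A}_\beta$ is a trim minimal automaton which is strongly connected. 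Since $\mathcal{A}_U=\mathcal{A}_\beta$ is then a single strongly connected component, property (H.1) holds immediately with $\mathcal{C}_U=\mathcal{A}_U$.

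For (H.2), I would argue that because $\mathcal{A}_U$ is the \emph{minimal} automaton of $0^*\rep_U(\mathbb{N})$, any two distinct states $p\neq q$ are Myhill--Nerode inequivalent, so there is a word $w$ accepted from exactly one of $p,q$; say $\delta_U(p,w)\in F_U$ and $\delta_U(q,w)\notin F_U$. The subtlety is that (H.2) asks for a word $x_{pq}$ landing \emph{inside} versus \emph{outside} $\mathcal{C}_U$, not merely distinguishing final from non-final. But since $\mathcal{A}_U=\mathcal{C}_U$ is strongly connected, \emph{every} state of $\mathcal{A}_U$ lies in $\mathcal{C}_U$, so "landing outside $\mathcal{C}_U$" is impossible for any word that keeps us in the automaton. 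This means (H.2) as literally phrased cannot be satisfied when there is only one component — unless one reads the trim minimal automaton as completed with a sink (dead) state that lies outside $\mathcal{C}_U$. With that reading, $\delta_U(q,w)\notin\mathcal{C}_U$ exactly when $w$ is not a prefix of any word accepted from $q$; since $\beta$ is non-integer, not every letter sequence is valid (the automaton is incomplete), and the distinguishing word $w$ above, possibly extended by a further block to force a forbidden factor from $q$ while still being a valid continuation from $p$, does the job. I would phrase this carefully, using that $\mathcal{A}_U$ is incomplete precisely because $\beta\notin\mathbb{Z}$.

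Concretely, the main steps in order are: (1) cite Parry + Bertrand to identify $\mathcal{A}_U$ with the trim minimal $\beta$-automaton and conclude $\mathbb{N}$ is $U$-recognizable; (2) invoke strong connectedness of $\mathcal{A}_\beta$ to get (H.1); (3) invoke minimality to get, for each pair $p\neq q$, a distinguishing word, and then upgrade "distinguishes final/non-final" to "lands in $\mathcal{C}_U$ versus in the completing sink outside $\mathcal{C}_U$" using that the automaton is incomplete because $\beta$ is non-integer. I expect step (3) to be the only real obstacle: one must be precise about the convention (trim automaton vs. its completion) under which (H.2) makes sense, and must exhibit, from a mere Nerode-distinguishing word, an actual word that exits the (unique) strongly connected component from exactly one of the two states. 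The non-integrality of $\beta$ is what guarantees there is somewhere to exit to.
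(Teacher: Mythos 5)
Your overall route is the same as the paper's: the paper simply invokes Parry's and Bertrand's theorems to identify $\mathcal{A}_U$ with the canonical automaton $\mathcal{A}_\beta$, which is trim, minimal and strongly connected, and then declares (H.1) and (H.2) obvious. Your reading of (H.2) --- that in a strongly connected $\mathcal{A}_U$ the only way to ``leave $\mathcal{C}_U$'' is for $\delta_U$ to be undefined (equivalently, to fall into the sink of the completion) --- is exactly the convention the paper uses elsewhere, e.g.\ in the Fibonacci example, where (H.2) holds precisely because $\delta_U(q_1,1)$ is undefined. So (H.1) and the identification in your steps (1)--(2) are fine and match the paper.

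The one genuine soft spot is your step (3). Minimality only gives you a word $w$ with $\delta_U(p,w)\in F_U$ and $\delta_U(q,w)\notin F_U$ (or vice versa), and the case where $\delta_U(q,w)$ is \emph{defined but non-final} is not covered by your suggestion to ``extend $w$ by a further block forcing a forbidden factor from $q$ while remaining valid from $p$'': two states can have identical domains of defined continuations while differing in finality, so such an extension need not exist, and the argument as stated would fail for a general automaton. The clean repair --- and what makes the paper's ``obvious'' legitimate --- is that this case never occurs here: for a Bertrand system, $0^*\rep_U(\mathbb{N})$ is the set of factors of the admissible $\beta$-expansions, hence prefix-closed, so every state of the trim minimal automaton $\mathcal{A}_U$ is final. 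Consequently the right language of a state is exactly the set of words on which $\delta_U$ is defined from that state, and Myhill--Nerode distinguishability of $p\neq q$ directly produces a word defined from exactly one of the two states, which is (H.2) under your (correct) dead-state reading, with no extension step needed. With that observation inserted, your argument is complete and coincides with the paper's.
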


We can therefore apply Theorem~\ref{the:main} to this class of Bertrand
numeration systems.

Finally, we give a lower bound when the numeration system satisfies weaker hypotheses than those of Theorem~\ref{the:main}.

\begin{proposition}
Let $U$ be any numeration system (not necessarily linear). The number of state of $\mathcal{A}_{U,m}$ is at least $|\rep_U(m)|$.
\end{proposition}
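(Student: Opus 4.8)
\emph{The plan.} I would produce $\ell:=|\rep_U(m)|$ words that are pairwise inequivalent for the Myhill--Nerode congruence of $L:=0^*\rep_U(m\mathbb{N})$ and that are all co-accessible, so that they correspond to $\ell$ distinct states of the trim minimal automaton $\mathcal{A}_{U,m}$. Write $\rep_U(m)=w_{\ell-1}\cdots w_0$ with $w_{\ell-1}\neq 0$, and for $1\le i\le\ell$ set $\pi_i=w_{\ell-1}\cdots w_{\ell-i}$ (the prefix of length $i$) and $\tau_i=w_{\ell-1-i}\cdots w_0$ (the complementary suffix, of length $\ell-i$), so that $\pi_i\tau_i=\rep_U(m)$. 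Since $m\in m\mathbb{N}$ we have $\rep_U(m)\in L$; hence $\tau_i\in\pi_i^{-1}L$, so each $\pi_i$ is co-accessible and its $\sim_L$-class really is a state of $\mathcal{A}_{U,m}$.

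Before the main step I would record one elementary fact: greedy representations are length-monotone. If $v$ is a greedy representation of length $h\le\ell-1$, the greedy condition gives $\val_U(v)<U_h\le U_{\ell-1}$, whereas $\val_U(\rep_U(m))=m\ge w_{\ell-1}U_{\ell-1}\ge U_{\ell-1}$; consequently any greedy representation of an integer $\ge m$ has length $\ge\ell$.

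The core claim is then that $\pi_i\not\sim_L\pi_j$ for all $1\le i<j\le\ell$. Suppose not. From $\pi_j\tau_j=\rep_U(m)\in L$ and $\pi_i\sim_L\pi_j$ we obtain $\pi_i\tau_j\in L$. This word has length $i+(\ell-j)\le\ell-1$, and, because $i\ge1$, its leftmost digit is $w_{\ell-1}\neq 0$. Any word of $L=0^*\rep_U(m\mathbb{N})$ with nonzero leading digit must be of the form $\rep_U(n)$ for some positive multiple $n$ of $m$, so $n\ge m$; by the previous paragraph $|\rep_U(n)|\ge\ell$, contradicting $|\pi_i\tau_j|\le\ell-1$. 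This proves the claim, and hence $\mathcal{A}_{U,m}$ has at least $\ell=|\rep_U(m)|$ states.

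The only delicate point --- and the one place where the $0^*$ padding in the definition of $L$ intervenes --- is that $\rep_U(m)$ is not a shortest word of $L$ (all the words $0^n$ lie in $L$), so one cannot simply pump along an accepting run of $\rep_U(m)$ and invoke minimality of length. This is exactly why the prefixes $\pi_i$ are taken with $i\ge1$: each begins with the nonzero digit $w_{\ell-1}$, so the shortened word $\pi_i\tau_j$ still begins with a nonzero digit and is therefore forced to be a genuine greedy representation $\rep_U(n)$, which is what lets the length comparison bite. Note that no linearity assumption on $U$ is used anywhere.
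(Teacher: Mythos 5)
Your proof is correct and follows essentially the same route as the paper: the same decomposition of $\rep_U(m)$ into prefixes $p_i$ and complementary suffixes $s_j$, with $s_j$ serving as the distinguishing word because the shortened word $p_is_j$ begins with a nonzero digit and is too short to be the greedy representation of a positive multiple of $m$. Your write-up merely makes explicit two points the paper leaves implicit (the length-monotonicity of greedy representations and the co-accessibility of the classes $[p_i]$ in the trim automaton), which is fine.
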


\begin{proof}
Let $n=|\rep_U(m)|$. For each $i\in\{1,\ldots,n\}$, we define $p_i$ (resp. $s_i$) to be the prefix (resp. suffix) of length $i$ (resp. $n-i$) of  $\rep_U(m)$. We are going to prove that for all $i,j\in\{1,\ldots,n\}$, we have $p_i\not\sim_{0^*\rep_U(m\N)} p_j$.  Let $i,j\in\{1,\ldots,n\}$. We may assume that $i<j$. Obviously, the word $p_js_j$ belongs to $0^*\rep_U(m\N)$. On the other hand, observe that $|p_is_j|\in\{1,\ldots,n-1\}$. Therefore the word $p_is_j$ does not belong to $0^*\rep_U(m\N)$ since it cannot simultaneously be greedy and satisfy $\val_U(p_is_j)\equiv 0\pmod{m}$.
Hence, the word $s_j$ distinguishes $p_i$ and $p_j$.
\end{proof}

\section{Perspectives}
\begin{itemize}
\item With the same assumptions as in Theorem~\ref{the:main}, can we count the number of states from which only finitely many words are accepted?
\item Can we weaken the assumptions of Theorem~\ref{the:main}?
\item If $X$ is a finite union of arithmetic progressions, can we give bounds for the number of states of the trim minimal automaton accepting $0^*\rep_U(X)$?
\end{itemize}

\bibliographystyle{eptcs}

\end{document}